\documentclass[aps,prd,twocolumn,superscriptaddress]{revtex4-2}

\usepackage{makecell} 
\usepackage{microtype}
\usepackage{graphicx}
\usepackage{dcolumn}
\usepackage{bm}
\usepackage{caption} 
\usepackage{mathtools} 
\usepackage{tensor} 
\usepackage[hidelinks]{hyperref} 
\usepackage{amssymb}
\usepackage{mathtools}
\usepackage{mathrsfs}
\usepackage{amsmath} 
\usepackage{xcolor}
\usepackage{array} 
\usepackage{booktabs}
\usepackage{subcaption}
\usepackage{enumitem}
\usepackage{soul}
\usepackage{amsthm}
\usepackage{stmaryrd}

\usepackage[toc,page]{appendix}

\newtheorem{theorem}{Theorem}

\theoremstyle{remark}

\newcommand{\dd}{\mathrm{d}}

\begin{document}

\title{Generalized Einstein-ModMax-ScalarField theories and new exact solutions}

\author{Leonel Bixano}
    \email{Contact author: leonel.delacruz@cinvestav.mx}
\author{Tonatiuh Matos}%
 \email{Contact author: tonatiuh.matos@cinvestav.mx}
\affiliation{Departamento de F\'{\i}sica, Centro de Investigaci\'on y de Estudios Avanzados del Intituto Politécnico Nacional, Av. Intituto Politécnico Nacional 2508, San Pedro Zacatenco, M\'exico 07360, CDMX.
}%

\date{\today}

\begin{abstract}
We present a generalized Ernst-type framework for stationary, axisymmetric spacetimes in which a scalar field is coupled to the electrodynamic field, \emph{with a particular focus on the ModMax theory}. Our approach relies on the Weyl stationary-axisymmetric ansatz and explicitly allows for a nonzero rotational metric function, $\omega\neq 0$. The resulting setup is broad enough to encompass wide classes of scalar couplings, including dilatonic and phantom-like sectors, and can be tailored to specific models such as Einstein-ModMax, Kaluza–Klein theories, low-energy string-inspired scenarios, entanglement relativity and related generalizations. Within this scheme, we derive two novel families of exact rotating solutions in the sector where the electromagnetic invariants obey $\mathcal F/\mathcal G=\mathrm{constant}$. This regime is particularly significant for ModMax, as it preserves genuinely nonlinear features while still admitting an analytically manageable description.
\end{abstract}

\maketitle

\section{Introduction}
Nonlinear electrodynamics offers a natural framework for investigating how strong-field deviations from Maxwell theory affect both the behavior of electromagnetic fields and the related gravitational dynamics. Among the various nonlinear models that have been introduced, ModMax holds a special status. In four spacetime dimensions, it is the only one-parameter nonlinear modification of Maxwell electrodynamics that retains two of the most stringent symmetries of the linear theory: conformal invariance and continuous electric-magnetic duality invariance \cite{Bandos:2020jsw,Kosyakov:2020wxv}. Owing to this, ModMax has quickly emerged as an especially appealing setting in which to reexamine classical self-gravitating solutions and to assess how much of the exact-solution structure of Einstein-Maxwell theory persists once one moves beyond the linear regime.

From a geometric standpoint, having both conformal symmetry and duality invariance simultaneously is extremely constraining. Conformal invariance implies that, in four dimensions, the electromagnetic sector carries no intrinsic length scale, which sharply distinguishes ModMax from more conventional nonlinear theories such as Born–Infeld. Duality invariance, on the other hand, requires the field equations to remain unchanged under continuous rotations that mix the electric and magnetic components, thus preserving one of the hallmark structural features of source-free Maxwell theory \cite{Bandos:2020jsw,Kuzenko:2024zra,Ayon-Beato:2024vph}. Taken together, these properties render ModMax conceptually remarkable and indicate that it may possess a nontrivial class of exact solutions that is richer than one might initially anticipate for a nonlinear model.

At the same time, the list of known exact self-gravitating solutions in Einstein–ModMax theory is still fairly limited. The first explicitly derived self-gravitating configurations were Reissner–Nordström-type charged black holes and exact gravitational-wave spacetimes, which already displayed the characteristic charge-screening effect produced by the nonlinear parameter \cite{Flores-Alfonso:2020euz}. Shortly thereafter, charged Taub–NUT solutions and their nonlinear generalizations were obtained, along with NUT–wormhole and Taub–Bolt sectors \cite{BallonBordo:2020jtw,Flores-Alfonso:2020nnd}. Accelerating black holes were subsequently constructed in asymptotically AdS backgrounds, yielding some of the earliest instances of accelerated exact solutions in this nonlinear conformal electrodynamics \cite{Barrientos:2022bzm}. More recently, the catalog of explicit solutions has been broadened to encompass Melvin–Bonnor electromagnetic universes, electromagnetized black holes, swirling geometries, exact multi–black-hole configurations, and black diholes generated via generalized Harrison transformations \cite{Barrientos:2024umq,Bokulic:2025usc,Bokulic:2025ucy}.

A recurring characteristic of the exact solutions currently known is that they are typically confined to very restricted sectors—most often static, purely electric, purely magnetic, or obtained through strong algebraic simplifications. In contrast, rotating stationary-axisymmetric configurations, particularly when a scalar field is present, are far less well understood in the context of ModMax.

In this work, we develop a generalized Ernst formalism for stationary, axisymmetric systems that incorporate a scalar field and nonlinear electrodynamics, placing particular emphasis on ModMax and Weyl geometries with a nonvanishing rotational metric function $\omega \neq 0$. We then solve this framework to derive two new families of solutions in the regime characterized by $\mathcal{F}/\mathcal{G}$.

We start by introducing the Lagrangian in the \textit{Einstein frame}, which will be employed throughout this work
\begin{equation}\label{Lagrangiano}
    \mathfrak{L}=\sqrt{-g}\bigg(-\mathcal R +2\epsilon_0 (\nabla \phi)^2 + e^{-2 \alpha_0 \phi } \mathfrak{L}_{MM} \bigg),
\end{equation}
where 
\[
    \mathfrak{L}_{MM}= \mathcal{F} \; \cosh{\gamma} +\sqrt{\mathcal{F}^2+\mathcal{G}^2} \; \sinh{\gamma} ,
\]
\(g = \det(g_{\mu\nu})\) is the determinant of the metric tensor, \(\mathcal R\) denotes the Ricci scalar, \( (\nabla \phi)^2= \nabla_{\mu} \phi \; \nabla^{\mu} \phi \), \(\gamma\) is the deformation parameter, \(\mathcal F = F_{\mu \nu} F^{\mu \nu}\) is the usual Maxwell invariant, \(\mathcal G = F_{\mu \nu} \star F^{\mu \nu}\) stands for its dual invariant, and \(\star F^{\mu \nu}\) denotes the dual Faraday tensor. The scalar field is denoted by \(\phi\), and the parameter \(\epsilon = \pm 1\) serves to distinguish the dilaton scalar field from the phantom scalar field, respectively. The coupling constant \(\alpha_{0} \in \{0, 1, 3, \tfrac{1}{2\sqrt{3}}\}\) identifies the particular theoretical model being studied; for instance, these values correspond, in order, to the Einstein–ModMax (EMM) theory, the low-energy effective superstring (SS) theory, the Kaluza–Klein (KK) theory, and entanglement relativity (ER).\footnote{See \cite{Minazzoli:2025gyw,Minazzoli:2025nbi} .}.
The corresponding field equations are:
\begin{subequations}\label{EcuacionesDeCampoOriginales}
\begin{equation}\label{Eq:Campo1}
    \nabla_\mu \left(  P^{\mu \nu} \right)=0,
\end{equation}
\begin{equation}\label{Eq:Campo2}
    \epsilon_0 \nabla^2 \phi+\frac{\alpha_0}{2}  \left( e^{-2\alpha_0 \phi}\, \mathfrak{L}_{MM} \, \right)=0,
\end{equation}
{\small
\begin{equation}\label{Eq:Campo3}
    \mathcal R_{\mu \nu}=2 \epsilon_0 \nabla_\mu \phi \nabla_\nu \phi  + 2 \, w \, e^{-2\alpha_0 \phi} \left( F_{\mu \sigma} \tensor{F}{_\nu}{^\sigma} -\frac{1}{4} g_{\mu \nu } \mathcal{F} \right),
\end{equation}
}
\end{subequations}
where, in line with the concepts developed in \cite{Ayon-Beato:2024vph}, we made use of the two fundamental symmetries of ModMax electrodynamics: conformal invariance and electric–magnetic duality invariance. Conformal invariance implies that the trace of the ModMax electromagnetic energy–momentum tensor vanishes,
\[
(T^\mu{}_\mu)_{MM}=0,
\]
then, the trace of equation \eqref{Eq:Campo3} is given by $\mathcal R=2\epsilon_0 (\nabla \, \phi)^2 $, while duality invariance guarantees that its traceless component remains unaffected by duality rotations. Moreover, in four spacetime dimensions one can employ the algebraic identity
\[
F_{\mu\rho}\,{}^\star F_{\nu}{}^{\rho}
-\frac14 g_{\mu\nu}\,\mathcal G=0.
\]
In complete analogy, we have introduced the following variables to simplify the ensuing computations
\begin{subequations}
\begin{align}
    &\kappa^2=e^{-2\alpha_0 \, \phi }, \\
    &\mathtt{X} = \kappa^2 \, \mathcal{F}, \\
    &\mathtt{Y} = \kappa^2 \, \mathcal{G}, \\
    &\Delta =\sqrt{\mathtt{X}^2+\mathtt{Y}^2},\\
    &\tan{\Theta}=\frac{\mathtt{Y}}{\mathtt{X}},
\end{align}
\begin{align}
    w(\Theta)&=\cosh{\gamma} +\cos{\Theta} \, \sinh{\gamma} ,
\\
v(\Theta)&=\sin{\Theta} \, \sinh{\gamma} ,
\end{align}
    \begin{equation}
        P^{\mu \nu}=\kappa^2 \Big[ w\, F^{\mu \nu} + v \,  \star F^{\mu \nu}\Big].
    \end{equation}
\end{subequations}
Using this representation the ModMax part of the lagrangian takes the form 
\[
    e^{-2 \alpha_0 \phi } \mathfrak{L}_{MM}=w \, \mathtt{X}+v \, \mathtt{Y}, 
\]
therefore
\[
    \frac{\partial}{\partial \mathcal{F}}\mathfrak{L}_{MM}=w, \qquad \frac{\partial}{\partial \mathcal{G}}\mathfrak{L}_{MM}=v.
\]

It is straightforward to observe that turning off the deformation parameter by setting \(\gamma = 0\) leads to \(w = 1\) and \(v = 0\). Consequently, we obtain \(\mathfrak{L}_{MM} = \mathtt{X}\), \(P^{\mu \nu}=\kappa^2 F^{\mu \nu}\), which reproduces exactly the formulation presented in \cite{Bixano:2026xum}.
\paragraph{Convention for the ModMax sector.}
Within the usual ModMax framework, the electromagnetic Lagrangian takes the form
\[
\tilde{\mathfrak{L}}_{MM}=\mathcal{F}\cosh\gamma-\sqrt{\mathcal{F}^2+\mathcal{G}^2}\,\sinh\gamma .
\]
In this work, we employ an equivalent parametrization of the complete Einstein–scalar–electromagnetic action, tailored to our sign conventions.The translation from our convention to the usual ModMax formulation is achieved by making the replacement
\[
\gamma \mapsto -\gamma .
\]
Accordingly, all formulas stay the same, as long as one simultaneously substitutes the auxiliary coefficients by
\[
v\mapsto -v,
\qquad
w\mapsto \tilde w=\cosh\gamma-\cos\Theta\,\sinh\gamma .
\]
No additional changes are necessary, because all of the reduced equations involve the ModMax sector solely via these coefficients and the derived quantities they generate.

We consider a stationary, axisymmetric space-time, meaning that the geometry possesses two Killing vectors, $Killing = \{\partial_t, \partial_\varphi\}$. Incorporating these symmetries, we can employ the following Weyl ansatz for the metric:
\begin{equation}\label{ds Cilindricas}
     ds^2=-f\left ( dt-\omega d \varphi \right )^2+f^{-1} \left ( e^{2k} (d\rho ^2 + dz^2) +\rho^2 d\varphi^2 \right ), 
\end{equation}
In this setup, the metric functions $\{f, \omega, \kappa\}$ are taken to depend on the coordinates $(\rho,z)$. We also employ an axisymmetric four-potential ansatz of the form $A_{\mu} = \big[ A_t(\rho,z),\,0,\,0,\,A_{\varphi }(\rho,z) \big]$, so that the scalar field is likewise a function of these coordinates, $\phi(\rho,z)$.

The spheroidal projection is presented in Appendix \ref{Apendix:Coordinates protection}.
\section{Potentials method}
It is important to emphasize that the construction of the formalism developed in this work is based on the article \cite{Bixano:2026xum}, which itself is based on the earlier studies \cite{Matos:2000ai,Matos:2000za}, where Ernst-type potentials were rigorously generalized to an \emph{Einstein-Maxwell-Scalar field} setting. In the present paper, we further broaden this framework to encompass an \emph{Einstein-ModMax-Scalar field} formalism.

Consider two operators that are orthogonal to each other (that is, they satisfy $D\tilde{D} = 0$):
\begin{equation}\label{OperadoresDDguiño}
    D\equiv \begin{bmatrix}
        \partial_\rho \\
        \partial_z
    \end{bmatrix}, \qquad 
    \Tilde{D}=
    \begin{bmatrix}
    \begin{array}{c}
    \partial_z \\
    -\partial_\rho
    \end{array}
    \end{bmatrix},
\end{equation}
which has the following properties \( \tilde{\tilde{D}} =-D, \quad  \Tilde{D}A\Tilde{D}B=DADB, \quad \Tilde{D}ADB=-\Tilde{D}BDA, \quad \tilde{D}ADA=0, \quad \Tilde{D}\Tilde{D}=DD=\partial_\rho ^2 +\partial_z^2,\) and we shall adopt the notation $DDf=D^2f$, and $Df Df=Df^2$.
Taking into account the operators (\ref{OperadoresDDguiño}), the definition $\kappa^2 = e^{-2\alpha_0 \phi}$, and the metric (\ref{ds Cilindricas}), we can rewrite (\ref{EcuacionesDeCampoOriginales}) in the following form:
{\small
\begin{subequations}\label{EcuacionesDeCampoV1}
\begin{center}
    \text{Klein-Gordon Equation:}
    \begin{multline}
        D^2\kappa +D\kappa \left( \frac{D\rho}{\rho} - \frac{D\kappa}{\kappa} \right)+ v \; \frac{\alpha_0^2 \kappa^3}{\rho \epsilon_0 }2\tilde{D} A_t DA_\varphi \\
        -w\;\frac{f \kappa^3 \alpha_0^2 }{\rho^2 \epsilon_0 }  \left[ L^2(\frac{\omega}{L} DA_t +DA_\varphi)^2-\frac{\rho^2}{f^2} DA_t^2 \right] =0 \label{eq:KleinGordonV1} ,
    \end{multline}
    \text{Maxwell Equations:}
    \begin{align}
        &D \left[ w\;\frac{2 \kappa^2 f}{\rho}  (\omega DA_t +DA_\varphi)-2v\; \kappa^2\tilde{D}A_t \right]=0, \label{Eq:Maxwell1V1} \\ 
        &D \bigg[ 2 w\;  \kappa^2 \left( \frac{f \omega}{\rho}  (\omega DA_t +DA_\varphi) -\frac{\rho}{f} DA_t \right) \nonumber \\
        &\qquad \qquad +2v\;\kappa^2 \tilde{D}A_\varphi \bigg]=0, \label{Eq:Maxwell2V1}
    \end{align}  
    \text{Einstein Equations:}
    \begin{multline}
        D^2f + Df\left[ \frac{D\rho}{\rho} - \frac{Df}{f} \right]+\frac{f^3}{\rho^2}D\omega^2 \\
        -w\; \frac{2  \kappa^2 f^2}{\rho^2}\left[ (\omega DA_t +DA_\varphi)^2 +\frac{\rho^2}{f^2} DA_t^2 \right]=0, \label{Eq:Einstein1V1}
    \end{multline}
    \begin{equation}
        D^2\omega - D\omega \left[ \frac{D\rho}{\rho} - \frac{2 Df}{f} \right]+w\; \frac{4  \kappa^2}{f}   (\omega DA_t +DA_\varphi) DA_t=0. \label{Eq:Einstein2V1}
    \end{equation}
\end{center}
\end{subequations}
}

Introducing the notation $[MM\text{-}E]^{\nu} = \nabla_\mu \left( P^{\mu \nu} \right)$, equation (\ref{Eq:Maxwell1V1}) is obtained from the $\varphi$-component $[MM\text{-}E]^{\varphi} = 0$, while equation (\ref{Eq:Maxwell2V1}) arises from the time component $[MM\text{-}E]^{t} = 0$. Next, defining $[E\text{-}E]_{\mu \nu} = R_{\mu \nu} - 2 \epsilon_0 \nabla_\mu \phi \nabla_\nu \phi - 2w\, \kappa^2 \left( F_{\mu \sigma} \tensor{F}{_\nu}{^\sigma} - \frac{1}{4} g_{\mu \nu} F^2 \right)$, equation (\ref{Eq:Einstein1V1}) follows from the $tt$-component $[E\text{-}E]_{tt} = 0$, whereas equation (\ref{Eq:Einstein2V1}) is derived from the linear combination $[E\text{-}E]_{\varphi t} + \omega \,[E\text{-}E]_{tt} = 0$.
\subsection{Definition of the potentials}
By making use of equations (\ref{Eq:Maxwell1V1}) and (\ref{Eq:Einstein2V1}), two correspondingly important potentials can be introduced
\begin{subequations}\label{DefinicionPotencialesGeneral}
\begin{align}
    &\Tilde{D}\chi=w\, \tilde{D}\chi_{MB}-v \; \kappa^2 \, \tilde{D} \psi, \label{TDchi}\\
    &\Tilde{D} \epsilon = \frac{f^2}{\rho} D\omega + \psi \Tilde{D}\chi \label{TDepsilon},
\end{align}
\end{subequations}
where
\[
    \Tilde{D}\chi_{MB}=\frac{2f\kappa^2}{\rho}  (\omega DA_t +DA_\varphi),
\]
which is associated with the $\chi$-potential introduced by Matos-Bixano in \cite{Bixano:2026xum}. The presence of these two potentials is equivalent to satisfying the first ModMax equation and the second Einstein equation, with $\psi = 2A_t$.
Therefore, we introduce the \textbf{potentials} $Y^A$, defined as
\begin{equation}\label{Potenciales}
    (Y^A)^T=\Big[f,\epsilon,\psi,\chi,\kappa \Big],
\end{equation}
which represent, in order, the gravitational, rotational, electric, magnetic, and scalar potentials.
By inserting the definitions (\ref{DefinicionPotencialesGeneral}) into the field equations (\ref{EcuacionesDeCampoV1}), the latter simplify to
{\small
\begin{subequations}\label{EcuacionesDeCampoV2}
\begin{center}
    \text{Klein-Gordon equation}
    \begin{equation}
        D(\rho D\kappa ) - \frac{\rho }{\kappa} D\kappa^2+\frac{\rho \kappa^3 \alpha_0^2}{4 f \epsilon_0}  \left( \frac{v^2+w^2}{w}D\psi^2-\frac{D\chi^2}{w\kappa^4 }  \right) =0 \label{eq:KleinGordonV2} ,
    \end{equation}
    \text{Maxwell equations}
    \begin{multline}\label{Eq:Maxwell1V2}
        D \left[ \frac{\rho}{f} \left( \frac{v}{w} D\chi + \frac{v^2+w^2}{w}\kappa^2 D\psi \right) \right] \\
        - \frac{\rho}{f^2} (D\epsilon-\psi D \chi)D\chi=0 ,
    \end{multline}
    \begin{multline}\label{Eq:Maxwell2V2}
        D \left[ \frac{\rho}{f} \left( \frac{1}{w \kappa^2} D\chi + \frac{v}{w} D\psi \right) \right] \\
        + \frac{\rho}{f^2} (D\epsilon-\psi D \chi)D\psi=0 , 
    \end{multline}  
    \text{Einstein equations}
    \begin{multline}\label{Eq:Einstein1V2}
        D (\rho Df)+\frac{\rho}{f}\left( (D\epsilon-\psi D\chi )^2 -Df^2 \right)  \\
        -\frac{\rho }{2} \left( \frac{v^2+w^2}{w}\kappa^2 D\psi^2+\frac{D\chi^2}{w\kappa^2 }+2\frac{v}{w}D\psi D\chi  \right) = 0 ,
    \end{multline}
    \begin{equation}\label{Eq:Einstein2V2}
        D (\rho (D\epsilon-\psi D\chi)) - \frac{2 \rho }{f} (D\epsilon-\psi D\chi) Df =0. 
    \end{equation}  
\end{center}
\end{subequations}
}
Equations (\ref{Eq:Maxwell2V2}) and (\ref{Eq:Einstein2V2}) follow from the analyticity of the metric functions $A_t$ and $\omega$, i.e. $\Tilde{D} D A_\varphi=0$ and $\Tilde{D} D \omega=0$, respectively.

The field equations in (\ref{EcuacionesDeCampoV2}) can be rederived from the Euler–Lagrange expression:
\begin{equation}\label{Eq:EulerLagrange}
    D\left(  \frac{\partial \mathfrak{L}}{\partial(DY^A)} \right)
    - \frac{\partial \mathfrak{L}}{\partial Y^A} = 0,
\end{equation}
by employing the following Lagrangian density:
\begin{multline}\label{LagrangianoTesis3}
    \mathfrak{L}=\frac{\rho}{2f^2}\left( Df^2 + (D\epsilon-\psi D\chi)^2\right) + \frac{2 \rho \epsilon_0 }{\alpha_0^2 \kappa^2}D\kappa^2 \\
    -\frac{\rho}{2f} \left( \frac{v^2+w^2}{w}\kappa^2 D\psi^2+\frac{D\chi^2}{w\kappa^2 }+2\frac{v}{w}D\psi D\chi  \right).
\end{multline}
\subsection{Space of potentials}
From (\ref{LagrangianoTesis3}), the line element of the associated potential space is:
\begin{multline}\label{ds Transformado}
    ds^2=\frac{1}{2f^2}\left( \dd f^2 + (\dd \epsilon-\psi \dd \chi)^2\right) + \frac{2 \epsilon_0 }{\alpha_0^2 \kappa^2}\dd \kappa^2 \\
    -\frac{1}{2f}  \left( \frac{v^2+w^2}{w}\kappa^2 \dd \psi^2+\frac{\dd \chi^2}{w\kappa^2 }+2\frac{v}{w}\dd \psi \dd \chi  \right),
\end{multline}
To proceed with the method introduced in \cite{Bixano:2026xum}, we begin by employing the orthonormal coframe given by
\begin{equation}
    \begin{aligned}
        &\theta^0 = \frac{\dd f}{\sqrt2\,f}, \qquad \theta^1 = \frac{\dd \epsilon - \psi\,\dd \chi }{\sqrt2\,f}, \\
        &\theta^2 = \frac{\kappa\sqrt w}{\sqrt{2f}}\,\dd \psi, \qquad \theta^4 = \frac{\sqrt2}{\alpha_0}\frac{\dd \kappa}{\kappa}, \\
        &\theta^3 = \frac{1}{\kappa\sqrt{2fw}}\Big(\dd \chi+\kappa^2 v\, \dd \psi\Big),
    \end{aligned}
\end{equation}
thus, the line element is given by 
\(
\dd s^2 = (\theta^0)^2 + (\theta^1)^2 - (\theta^2)^2 - (\theta^3)^2 + \epsilon_0(\theta^4)^2
\), and the orthonormal vector basis \(\theta^a(e_b)=\delta^a_b\) is
{\small
\begin{equation}\label{eq:Base-Vectorial-Ortonormal}
    \begin{aligned}
        & e_0=\sqrt2\,f\,\partial_f,
 \quad
 e_1=\sqrt2\,f\,\partial_\epsilon, \quad e_4=\frac{\alpha_0\kappa}{\sqrt2}\,\partial_\kappa,\\
 &  e_3=\sqrt{2fw}\,\kappa\,(\partial_{\chi}+\psi\,\partial_\epsilon), \\
 &e_2=\frac{\sqrt{2f}}{\kappa\sqrt w}
 \Big[\partial_\psi-\kappa^2 v\,(\partial_{\chi }+\psi\,\partial_\epsilon)\Big].
    \end{aligned}
\end{equation}
}

As for the Killing vectors of the potential space \eqref{ds Transformado}, they are
{\small
\begin{equation}\label{eq:killing_vectors}
    \begin{aligned}
        &K_\epsilon=\partial_\epsilon,
        \qquad
        K_\chi=\partial_{\chi},
        \qquad
        K_\psi=\partial_\psi+\chi\,\partial_\epsilon,
        \\
        &K_g=2f\,\partial_f+2\epsilon\,\partial_\epsilon+\psi\,\partial_\psi+\chi\,\partial_{\chi},
        \\
        &K_\kappa=\kappa\,\partial_\kappa-\psi\,\partial_\psi+\chi\,\partial_{\chi}.
    \end{aligned}
\end{equation}
}
This follows from the fact that, due to the structure of the invariants, $\mathcal{F},\mathcal{G}$ transform homogeneously under these generators, where the invariants expressed in terms of the potentials are
{\small
\[
    \mathcal{F}=\frac{e^{-2k}}{2 \kappa^2}(\frac{\tilde{D}\chi_{MB}^2}{\kappa^2}-\kappa^2 D\psi^2),\qquad \mathcal{G}=-\frac{e^{-2k}}{2 \kappa^2}(D\psi \; D\chi_{MB}),
\]
}
giving rise to \(v=v(\kappa,k,D\psi,\tilde{D}\psi,D\chi,\tilde{D}\chi;\gamma)\), and \(w=w(\kappa,k,D\psi,\tilde{D}\psi,D\chi,\tilde{D}\chi;\gamma)\). It should be emphasized that the metric function is denoted by $k(\rho,z)$, while the scalar potential is represented by $\kappa(\rho,z)$.
\subsection{New ModMax Anzat}
To determine the associated variables $A,B,C$, we employ a Newman–Penrose-type null coframe, which serves as the basis for constructing these quantities, in line with the framework presented in \cite{Bixano:2026xum}. This coframe is given by:
{\small
\begin{equation}\label{eq:Coframe-NP-like}
    \begin{aligned}
        & \ell =\frac{\theta^0+\theta^1}{\sqrt{2}}, \quad n=\frac{\theta^0-\theta^1}{\sqrt{2}}, \quad m=\frac{\theta^2 + i \theta^3}{\sqrt{2}}, \quad s=\theta^4,
    \end{aligned}
\end{equation}
}
the metric can therefore be written as \(\mathrm ds^2=2\,\ell\,n-2\,m\,\bar m+\epsilon_0\,s^2\).

In this situation, the vector basis of the Newman–Penrose-type coframe is given by:
{\small
\begin{equation}\label{eq:Base-Vectorial-Nula}
    \begin{aligned}
        & e_\ell=f\,(\partial_f+\partial_\epsilon),
 \qquad
 e_n=f\,(\partial_f-\partial_\epsilon), \qquad e_s=\frac{\alpha_0\kappa}{\sqrt2}\,\partial_\kappa,\\
 & e_m=\frac{\sqrt f}{\kappa\sqrt w}
 \Big[\partial_\psi-\kappa^2\,(v+i\,w)\,(\partial_{\chi}+\psi\,\partial_\epsilon)\Big], \\
 & e_{\bar m}=\frac{\sqrt f}{\kappa\sqrt w}
 \Big[\partial_\psi-\kappa^2\,(v-i\,w)\,(\partial_{\chi}+\psi\,\partial_\epsilon)\Big].
    \end{aligned}
\end{equation}
}

The complete formulas for the connection 1-forms, the curvature 2-forms, and the corresponding Ricci invariant are provided in Appendix \ref{Appendix:Structure of the space of potentials}. What is important here is that, in the \emph{frozen ratio case} where \(\mathcal F / \mathcal G\) is constant, \(\Theta\) is likewise constant. This, in turn, implies that \(v = v_0\) and \(w = w_0\) are also constants. Under these conditions, using the definition \eqref{eq:Def_Lambda_Nu} into \eqref{eq:2Forma-Curvatura}, we can obtain the invariant scalars, which takes the form
\begin{equation}\label{eq:InvariantesCongelado}
\begin{aligned}
    \mathcal R
    &=
    -12
    -\frac{\alpha_0^2}{\epsilon_0}
    \left(
        1+\eta_0^2
    \right),
    \\
    \mathcal R_{IJ}\mathcal R^{IJ}
    &=
    36
    +4
    \frac{\alpha_0^4}{\epsilon_0^2}
    \left[
        1
        +4\eta_0^2
        +3\eta_0^4
    \right],
    \\
    \mathcal R_{IJKL}\mathcal R^{IJKL}
    &=
    48
    +4
    \frac{\alpha_0^4}{\epsilon_0^2}
    \left[
        3
        +14\eta_0^2
        +11\eta_0^4
    \right],
    \\
    \mathcal C_{IJKL}\mathcal C^{IJKL}
    &=
    24
    +
    \frac{2\alpha_0^4}{3\epsilon_0^2}
    \left[
        11
        +54\eta_0^2
        +43\eta_0^4
    \right]
    \\
    &\quad + 8\frac{\alpha_0^2}{\epsilon_0}
    \left[ 
        1+\eta_0^2
    \right].
\end{aligned}
\end{equation}
Therefore, this potential space is maximally symmetric and conformally flat.

Now, by applying the equalities \(A=\frac{1}{2}((1-i)\ell +(1+i)n)\), \( B=-\overline{m}\), and \( C=-\frac{\alpha_0}{\sqrt{2}} s \), we define the variables as follows:
\begin{subequations}\label{Variables ABC}
    \begin{equation}\label{VariableA}
        A=\frac{1}{2f}\left[ Df -i (D\epsilon - \psi D\chi) \right],
    \end{equation}
    \begin{equation}\label{VariableB}
        B=-\frac{1}{2\sqrt{f}}\left[  \sqrt{w}\, \kappa D\psi -i \frac{1}{\sqrt{w}\,  \kappa} \Big( D\chi +v\, \kappa^2 D\psi \Big) \right],
    \end{equation}
    \begin{equation}\label{VariableC}
        C=-\frac{D\kappa}{\kappa}.
    \end{equation}
\end{subequations}
By substituting the definition \eqref{Variables ABC} into \eqref{EcuacionesDeCampoV2} and taking into account the definition \eqref{eq:Def_Lambda_Nu}, we obtain the following representation of the field equations:
\begin{subequations}\label{EcuacionesDeCampo-ABC}
 \begin{center}
     \text{Field equations in terms of $\{ A,B,C\}$:}
 \end{center}
    \begin{equation}
        \frac{1}{\rho} D(\rho A)=B\overline{B} +A(A-\overline{A}),\label{EcuacionDeCampoA}
    \end{equation}
    \begin{align}
        \frac{1}{\rho} D(\rho B)&=-\frac{B}{2}(A-3\overline{A})+C\overline{B} \nonumber
        \\
        &\quad -\frac{\lambda}{2}\overline{B}+\frac{i}{2}(\overline{B}-B)\left[  \nu - 2 \frac{v}{w} C\right],\label{EcuacionDeCampoB}
    \end{align}
    \begin{equation}
        \frac{1}{\rho} D(\rho C)=\frac{\alpha_0^2}{2\epsilon_0}\left(B^2+\overline{B}^2 +i\frac{v}{w} (B^2-\overline{B}^2)\right),\label{EcuacionDeCampoC}
    \end{equation}
\end{subequations}
It is worth emphasizing that, in equations \eqref{EcuacionesDeCampo-ABC}, the additional contributions introduced by the ModMax extension are precisely those involving $\frac{v}{w}, \lambda,$ and $\nu$. If we freeze the system, namely by requiring that $\mathcal F / \mathcal G$ remain constant, then $\lambda = 0$ and $\nu = 0$, which leads to a substantial simplification of the equations. Likewise, by choosing $v = 0$, we once again recover the field equations in terms of $A, B, C$ presented in \cite{Bixano:2026xum}.

It is worth noting that equation \eqref{EcuacionDeCampoA} corresponds to the \emph{generalized Ernst potential equation} including a coupled scalar field within the ModMax framework, and in any theory. The phenomenological properties of ModMax are encoded in \eqref{EcuacionDeCampoB} and in the potential $\chi$. Consequently, this formulation provides a compact representation of all the relevant information and makes the underlying structure apparent. For a comprehensive discussion of this discussion, see Ref.~\cite{Bixano:2026xum}.
\section{Subspace projection}
In the frozen case where $\mathcal F/\mathcal G$ is constant, we impose
\[
    v = v_0, \quad w = w_0, \quad \lambda = \nu = 0,
\]
with $v_0, w_0 \in \mathbb{R}$, so that we can project the equations \eqref{EcuacionesDeCampo-ABC} onto a general subspace ($\xi=\mathfrak{s}+i \mathfrak{t},\overline{\xi}=\mathfrak{s}-i \mathfrak{t}$), given by
\begin{equation}\label{MetricaEspacioGeneral}
    ds^2=\frac{2(d\mathfrak{s}^2+d\mathfrak{t}^2)}{(1-\sigma [\mathfrak{s}^2+\mathfrak{t}^2] )^2}=\frac{d\xi d \overline{\xi}}{(1-\sigma \xi \overline{\xi})}, 
\end{equation}
where it is required to satisfy the Laplace equation in curved spaces, i.e.
\begin{equation}\label{Eq:LaplaceGeneraLambda}
    D(\rho D\mathfrak{s}^p)+\rho \tensor{\Gamma}{^p}{_{ij}} D\mathfrak{s}^i D\mathfrak{s}^j =0.
\end{equation}
where $\mathfrak{s}^p$ belongs to the set $\{ \mathfrak{s},\mathfrak{t}\}$, or, stated in another way
\begin{subequations}\label{Eq:LaplaceXi}
\begin{align}
    \frac{1}{\rho}D(\rho D \xi) &= \frac{-2\sigma}{1-\sigma \xi \, \overline{\xi}} \, \overline{\xi} \, (D \xi )^2 , \\
    \frac{1}{\rho}D(\rho D \, \overline{\xi}) &= \frac{-2\sigma}{1-\sigma \xi \, \overline{\xi}} \, \xi \, (D \overline{\xi} )^2 .
\end{align}
\end{subequations}
For an in-depth discussion of this concept, refer to \cite{Bixano:2025bio,Matos:2000za,Matos:2000ai}.
Therefore, in the frozen regime where $\mathcal F/ \mathcal G$ remains constant, the field equations \eqref{EcuacionesDeCampo-ABC}, when projected onto the subspace \eqref{MetricaEspacioGeneral}, take the form
\begin{widetext}
\begin{subequations}\label{N_EcuacionesDeCampo-abc}
    \begin{center}
    \text{Expressions of the extended field equations in terms of $\{ \mathtt{x},\mathtt{y},\mathtt{z}\}$}
    \end{center}
    \begin{align}
        \mathtt{x}_1(\mathtt{x}_1-\overline{\mathtt{x}}_2)+\mathtt{y}_1\overline{\mathtt{y}}_2&=\mathtt{x}_1\bigg(\ln \Big\{ (1-\sigma\,\xi\,\overline{\xi})^{2} \mathtt{x}_1 \Big\}\bigg)_{,\xi}, \label{N_EC-a1} \\
        \mathtt{x}_2(\mathtt{x}_2-\overline{\mathtt{x}}_1)+\mathtt{y}_2\overline{\mathtt{y}}_1&=\mathtt{x}_2\bigg(\ln \Big\{ (1-\sigma\,\xi\,\overline{\xi})^{2} \mathtt{x}_2 \Big\}\bigg)_{,\overline{\xi}}, \label{N_EC-a2} \\
        \mathtt{x}_1(\mathtt{x}_2-\overline{\mathtt{x}}_1)+\mathtt{x}_2(\mathtt{x}_1-\overline{\mathtt{x}}_2)+(\mathtt{y}_1\overline{\mathtt{y}}_1+\mathtt{y}_2\overline{\mathtt{y}}_2)&=(\mathtt{x}_1)_{,\overline{\xi}}+(\mathtt{x}_2)_{,\xi}, \label{N_EC-a3}
    \end{align}
    \begin{align}
        \frac{\mathtt{y}_1}{2}(3\overline{\mathtt{x}}_2-\mathtt{x}_1)+\mathtt{z}_1\overline{\mathtt{y}}_2-i \eta_0 \, \mathtt{z}_1 \, (\overline{\mathtt{y}}_2-\mathtt{y}_1)&=\mathtt{y}_1\bigg(\ln \Big\{ (1-\sigma\,\xi\,\overline{\xi})^{2} \mathtt{y}_1 \Big\}\bigg)_{,\xi}, \label{N_EC-b1} \\
        \frac{\mathtt{y}_2}{2}(3\overline{\mathtt{x}}_1-\mathtt{x}_2)+\mathtt{z}_2\overline{\mathtt{y}}_1-i \eta_0 \, \mathtt{z}_2 \, (\overline{\mathtt{y}}_1-\mathtt{y}_2)
        &=\mathtt{y}_2\bigg(\ln \Big\{ (1-\sigma\,\xi\,\overline{\xi})^{2} \mathtt{y}_2 \Big\}\bigg)_{,\overline{\xi}}, \label{N_EC-b2} \\
        \frac{\mathtt{y}_1}{2}(3\overline{\mathtt{x}}_1-\mathtt{x}_2)+\frac{\mathtt{y}_2}{2}(3\overline{\mathtt{x}}_2-\mathtt{x}_1)+\mathtt{z}_1\overline{\mathtt{y}}_1+\mathtt{z}_2\overline{\mathtt{y}}_2&-i \eta_0 \,\Big[ \mathtt{z}_1 \, (\overline{\mathtt{y}}_1-\mathtt{y}_2) +\mathtt{z}_2 \, (\overline{\mathtt{y}}_2-\mathtt{y}_1) \Big]\nonumber
        \\
        &=(\mathtt{y}_1)_{,\overline{\xi}}+(\mathtt{y}_2)_{,\xi}, \label{N_EC-b3}
    \end{align}
    \begin{align}
        \frac{\alpha_0^2}{2\epsilon_0}\left( \tensor{\mathtt{y}}{_1}{^2}+\tensor{\overline{\mathtt{y}}}{_2}{^2} +i \eta_0 \, (\tensor{\mathtt{y}}{_1}{^2}-\tensor{\overline{\mathtt{y}}}{_2}{^2})\right)=\mathtt{z}_1\bigg(\ln \Big\{ (1-\sigma\,\xi\,\overline{\xi})^{2} \mathtt{z}_1 \Big\}\bigg)_{,\xi},\label{N_EC-c1} 
        \\
        \frac{\alpha_0^2}{2\epsilon_0}\left( \tensor{\mathtt{y}}{_2}{^2}+\tensor{\overline{\mathtt{y}}}{_1}{^2} +i \eta_0 \, (\tensor{\mathtt{y}}{_2}{^2}-\tensor{\overline{\mathtt{y}}}{_1}{^2})\right)=\mathtt{z}_2\bigg(\ln \Big\{ (1-\sigma\,\xi\,\overline{\xi})^{2} \mathtt{z}_2 \Big\}\bigg)_{,\overline{\xi}},\label{N_EC-c2} 
        \\
        \frac{\alpha_0^2}{\epsilon_0}\left(\mathtt{y}_1 \mathtt{y}_2+\overline{\mathtt{y}}_1\overline{\mathtt{y}}_2 +i \eta_0 \, (\tensor{\mathtt{y}}{_1}\tensor{\mathtt{y}}{_2}-\overline{\mathtt{y}}{_1}\overline{\mathtt{y}}{_2})  \right)=(\mathtt{z}_1)_{,\overline{\xi}}+(\mathtt{z}_2)_{,\xi},\label{N_EC-c3} 
    \end{align}
\end{subequations}
\end{widetext}
where we have made use of the fact that
\begin{subequations}\label{AnzatABC}
\begin{align}
    A&=\mathtt{x}_1(\xi, \overline{\xi}\,) D\xi+\mathtt{x}_2(\xi, \overline{\xi}\,) D\, \overline{\xi},\\
    B&=\mathtt{y}_1(\xi, \overline{\xi}\,) D\xi+\mathtt{y}_2(\xi, \overline{\xi}\,) D\, \overline{\xi},\\
    C&=\mathtt{z}_1(\xi, \overline{\xi}\,) D\xi+\mathtt{z}_2(\xi, \overline{\xi}\,) D\, \overline{\xi},
\end{align}
\end{subequations}
and the notation $\eta_0=v_0 /w_0$. By choosing $\sigma = 0$, and thus focusing on an abstract flat subspace, the system simplifies to the equations derived in \cite{Matos:2000za,Matos:2010pcd}, now formulated for the \emph{ModMax} extension.

The relations between the potentials and the variables $\mathtt{x,y,z}$ transform to:
\begin{subequations}\label{N_RelacionesPotenciales-abc}
    \begin{center}
    \text{Relations between the potentials and $\{ \mathtt{x,y,z}\}$}
    \end{center}
    \text{\textbf{$\mathtt{x}$:}}
    \begin{align}
        \mathtt{x}_1=\frac{1}{2f}\big[ f_{,\xi} -i (\epsilon_{,\xi} - \psi \chi_{,\xi}) \big],\label{N_a1} 
        \\
        \mathtt{x}_2=\frac{1}{2f}\big[ f_{,\overline{\xi}} -i (\epsilon_{,\overline{\xi}} - \psi \chi_{,\overline{\xi}}) \big],\label{N_a2} 
    \end{align}
    \text{\textbf{$\mathtt{y}$:}}
    \begin{align}
        \mathtt{y}_1=-\frac{1}{2\sqrt{f}}\left[  \, \kappa \,\sqrt{w_0}\, \psi_{,\xi} -\frac{i}{\sqrt{w_0}\,\kappa} (\chi_{,\xi}+v_0 \, \kappa^2 \, \psi_{\xi}) \right],\label{N_b1} 
        \\
        \mathtt{y}_2=-\frac{1}{2\sqrt{f}}\left[  \, \kappa \,\sqrt{w_0}\, \psi_{,\overline \xi} -\frac{i}{\sqrt{w_0}\,\kappa} (\chi_{,\overline \xi}+v_0 \, \kappa^2 \, \psi_{\overline \xi}) \right],\label{N_b2} 
    \end{align}
    \text{\textbf{$\mathtt{z}$:}}
    \begin{align}
        \mathtt{z}_1=-\frac{1}{\kappa} \kappa_{,\xi},\label{N_c1} \\
        \mathtt{z}_2=-\frac{1}{\kappa} \kappa_{,\overline{\xi}}\label{N_c2}.
    \end{align}
\end{subequations}
\section{Kerr–Newman solution in the frozen ModMax framework}\label{Kerr–Newman solution in the frozen ModMax framework}
It is worth stressing that, by setting $\kappa=1$ and $\alpha_0=0$ in \eqref{N_EcuacionesDeCampo-abc}, that is, by restricting ourselves to the frozen ModMax theory with no coupled scalar field, we recover exactly the same field equations as those reported in \cite{Bixano:2026xum}. This happens because the \emph{ModMax-only contributions} correspond to the terms $i \eta \, \mathtt{z}_i(\overline{\mathtt{y}}_j-\mathtt{y}_p)$ in the $\mathtt{Y}$-block and $i \eta \, (\mathtt{y}_i \mathtt{y}_j-\overline{\mathtt{y}}_p \overline{\mathtt{y}}_q)$ in the $\mathtt{Z}$-block. Put differently, when the scalar field is turned off, the ModMax terms in the $\mathtt{XYZ}$-equations vanish and remain only in the definition of the $\chi$-potential, or, in the alternative case, we choose $\overline{\mathtt{y}}_i=\mathtt{y}_j$ with $i \neq j$.
\subsection{Kerr Newman solution in Maxwell framework}
In the Maxwell framework with $v_0 = 0$, the variables corresponding to the Kerr–Newman solution take the values $\mathtt{x}_1 = \mathtt{x}_{KN}$, $\mathtt{x}_2 = \mathtt{y}_1 = \mathtt{z}_1 = \mathtt{z}_2 = 0$, and $\mathtt{y}_2 = \mathtt{y}_{KN}$, where:
\begin{subequations}\label{eq:Solution-KN-Maxwell}
\begin{equation}\label{eq:x_KerrNewman}
\mathtt{x}_{KN}
=-
\frac{1+\sigma l_1\overline{\xi}}{(\xi+l_1)\,(1-\sigma \xi \, \overline{\xi})}.
\end{equation}
\begin{equation}\label{eq:y_KerrNewman}
\mathtt{y}_{KN}
=
\frac{Q_0\,\sqrt{\sigma(\xi+l_1)}}{(\overline{\xi}+l_1)^{3/2}\sqrt{(\sigma\,\xi\bar\xi-1)}},
\end{equation}
\end{subequations}
here $l_1$ denotes the geometric mass, $Q_0=q+i p$, $q$ is the geometric electric charge, $p$ the geometric magnetic charge, and $\sigma=1/(l_1{}^2-|Q_0|^2)$ represent the curvature of the subspace $(\xi,\overline{\xi})$.
\subsection{Kerr Newman potentials in Frozen ModMax framework}
Assuming $v = v_0$ and $w = w_0$, and using the solution \eqref{eq:Solution-KN-Maxwell} of \eqref{N_EcuacionesDeCampo-abc}, we can reconstruct the potentials associated with the Kerr–Newman solution in the frozen ModMax framework by means of \eqref{N_RelacionesPotenciales-abc}:
{\small
\begin{equation}\label{eq:Potenciales-KN-MM}
    \begin{aligned}
        f_{KN}^{MM}&=f_0\,\frac{\sigma\xi\bar\xi-1}{(\xi+l_1)(\bar\xi+l_1)}=f_{KN}, 
        \\
        \psi_{KN}^{MM}&=\sqrt{\frac{f_0\sigma}{w_0}}\left(\frac{Q_0}{\xi+l_1}+\frac{\bar Q_0}{\bar\xi+l_1}\right)=\frac{\psi_{KN}}{\sqrt{w_0}},
        \\
        \chi_{KN}^{MM}&=i\frac{\sqrt{f_0\sigma}}{\sqrt{w_0}}\left[(w_0+iv_0)\frac{Q_0}{\xi+l_1}-(w_0-iv_0)\frac{\bar Q_0}{\bar\xi+l_1}\right] \nonumber \\
        &  =\sqrt{w_0}\chi_{KN}-\frac{v_0}{\sqrt{w_0}} \psi_{KN}
        \\
        \epsilon_{KN}^{MM}&=\frac{i f_0\sigma}{2}\left[\frac{Q_0^2-2l_1(\xi+l_1)}{(\xi+l_1)^2}-\frac{\bar Q_0^{\,2}-2l_1(\bar\xi+l_1)}{(\bar\xi+l_1)^2}\right] \\
        & \qquad \qquad -\frac{v_0 f_0\sigma}{2w_0} \left( \frac{Q_0}{\xi+l_1} + \frac{\bar Q_0}{\bar\xi+l_1} \right)^2 \nonumber \\
        &=\epsilon_{KN}-\frac{v_0}{2w_0}\psi_{KN}^2, 
    \end{aligned}
\end{equation}
}
where $f_0 = 1/\sigma$, which the subscript $KN$ denotes the Kerr–Newman configuration in the Maxwell framework.

However, from the definition \eqref{DefinicionPotencialesGeneral} it can be shown that $\omega_{KN}^{MM}=\omega_{KN}$ and $A_{\varphi-KN}^{MM}=\frac{1}{\sqrt{w_0}}A_{\varphi-KN}$. Consequently, this solution only introduces a rescaling in the electromagnetic potential of the Kerr–Newman solution.
%
\begin{theorem}[Trivialization of the frozen ModMax theory without scalar fields to a Maxwell-type theory]
Consider the frozen sector of ModMax theory characterized by \(v = v_0, w = w_0\), which corresponds to the case of constant \(\mathcal{F}/\mathcal{G}\). Let us further assume that the scalar field is turned off, i.e. \(\mathtt{z}_i = 0\), and that \(\alpha_0^2 = 0\), which implies that \(\kappa = \kappa_0\) is a constant.

Thus, the frozen electromagnetic sector is equivalent, via a constant-line redefinition of potentials, to the Maxwell-like sector.
More precisely, if we establish
{\small
\[ 
    \widehat{\psi}=\kappa_0\sqrt{w_0}\,\psi,
\qquad
\widehat{\chi}=\frac{\chi_{MM}+v_0\kappa_0^2\psi}{\kappa_0\sqrt{w_0}},
\qquad
\widehat{\epsilon}=\epsilon+\frac{v_0\kappa_0^2}{2}\psi^2.
\]
}
Therefore, the definition of the frozen potentials reduces precisely to the standard Maxwell-like terms, see \cite{Bixano:2026xum}.
\[
    (f,\widehat{\epsilon},\widehat{\psi},\widehat{\chi},\kappa_0 )
\]
in particular, the quadratures for $\omega, A_\varphi$ assume the standard Einstein–Maxwell form with $\kappa_0$ taken as a constant, and $A_\varphi$ is only re-normalized by a factor $\sqrt{w_0}$.
\end{theorem}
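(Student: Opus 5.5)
The plan is to show that, once $v=v_0$, $w=w_0$ and $\kappa=\kappa_0$ are constants, the hatted variables turn the potential-space metric \eqref{ds Transformado}, and hence the Lagrangian \eqref{LagrangianoTesis3} and its Euler--Lagrange equations \eqref{EcuacionesDeCampoV2}, into exactly the Einstein--Maxwell form of \cite{Bixano:2026xum} with $v=0$, $w=1$. Since $\kappa$ is constant and $\alpha_0=0$, the scalar term $\frac{2\epsilon_0}{\alpha_0^2\kappa^2}\dd\kappa^2$ (read through the coframe element $\theta^4$) drops out. The Klein--Gordon equation \eqref{eq:KleinGordonV2} is multiplied by $\alpha_0^2$, so it is trivially satisfied. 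This reduces the problem to the four potentials $(f,\epsilon,\psi,\chi)$.

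The first step is to rewrite the electromagnetic block using the coframe. Since
\[
\theta^2=\tfrac{1}{\sqrt{2f}}\kappa_0\sqrt{w_0}\,\dd\psi=\tfrac{1}{\sqrt{2f}}\dd\widehat\psi,
\qquad
\theta^3=\tfrac{1}{\sqrt{2f}}\,\tfrac{\dd\chi+v_0\kappa_0^2\dd\psi}{\kappa_0\sqrt{w_0}}=\tfrac{1}{\sqrt{2f}}\dd\widehat\chi,
\]
we have $-(\theta^2)^2-(\theta^3)^2=-\frac{1}{2f}(\dd\widehat\psi^2+\dd\widehat\chi^2)$. Here the constancy of $v_0,w_0,\kappa_0$ is essential: it makes $\theta^2$ and $\theta^3$ exact differentials. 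This is the Maxwell form with $\kappa=1$ and $w=1$ (or with $\kappa_0$ if one keeps it, after an obvious rescaling).

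The second step handles the twist one-form $\theta^1$. We compute
\[
\widehat\psi\,\dd\widehat\chi=\psi(\dd\chi+v_0\kappa_0^2\dd\psi)=\psi\,\dd\chi+\tfrac{v_0\kappa_0^2}{2}\dd\psi^2,
\]
where the factors $\kappa_0\sqrt{w_0}$ cancel. It follows that $\dd\widehat\epsilon-\widehat\psi\,\dd\widehat\chi=\dd\epsilon-\psi\,\dd\chi$, so $\theta^0$ and $\theta^1$ are unchanged. The full line element is therefore $\frac{1}{2f^2}\big(\dd f^2+(\dd\widehat\epsilon-\widehat\psi\dd\widehat\chi)^2\big)-\frac{1}{2f}(\dd\widehat\psi^2+\dd\widehat\chi^2)$. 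Because the reduced field equations are the harmonic-map (Euler--Lagrange) equations \eqref{Eq:EulerLagrange} of this metric, they coincide with the Einstein--Maxwell ones in the hatted potentials. The same computation applied to \eqref{Variables ABC} shows that $A$ and $B$ take the Maxwell form, consistent with the remark that the $i\eta_0$ terms in \eqref{N_EcuacionesDeCampo-abc} vanish when $\mathtt z_i=0$.

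The final step, which I expect to be the only delicate point, is the quadratures. Here one must check that the defining relations \eqref{DefinicionPotencialesGeneral} transform consistently, not merely the Lagrangian. From \eqref{TDchi} with constant coefficients,
\[
\tilde D\chi+v_0\kappa_0^2\tilde D\psi=w_0\tilde D\chi_{MB},
\qquad\text{so}\qquad
\tilde D\widehat\chi=\sqrt{w_0}\,\tfrac{2f\kappa_0}{\rho}(\omega DA_t+DA_\varphi).
\]
With $\widehat\psi=2\widehat A_t$ and $\widehat A_t=\kappa_0\sqrt{w_0}A_t$, this is the Maxwell relation for the rescaled pair $(\widehat A_t,\widehat A_\varphi)$ with $\widehat A_\varphi\propto\sqrt{w_0}\,A_\varphi$, matching the rescaling observed in \eqref{eq:Potenciales-KN-MM}. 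Similarly, \eqref{TDepsilon} gives $\tilde D\epsilon-\psi\tilde D\chi=\frac{f^2}{\rho}D\omega$, which by step two equals $\tilde D\widehat\epsilon-\widehat\psi\tilde D\widehat\chi$, so the $\omega$ quadrature is unchanged. The main care point is the bookkeeping of the $\kappa_0$ factors and the sign convention for $v_0$ in the normalization of $A_\varphi$. Integrability is automatic, because the hatted potentials satisfy the Maxwell-form equations \eqref{Eq:Maxwell1V2}--\eqref{Eq:Einstein2V2}.
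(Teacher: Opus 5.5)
Your proposal is correct, and it proves somewhat more than the paper's own argument. The paper works only with the first-order defining relations \eqref{TDchi}--\eqref{TDepsilon}. It proceeds in four moves:
\begin{itemize}
\item it integrates \eqref{TDchi} to $\chi=w_0\chi_{MB}-v_0\kappa_0^2\psi$;
\item it deduces $\widehat\chi=(\sqrt{w_0}/\kappa_0)\chi_{MB}$;
\item it substitutes into \eqref{TDepsilon} and absorbs the exact term $v_0\kappa_0^2\psi\,\widetilde D\psi$ into $\widehat\epsilon$;
\item it reads off $\widehat A_\mu=\kappa_0\sqrt{w_0}\,A_\mu$.
\end{itemize}
The statement that the rest of the system then ``takes exactly the Maxwell-like form'' is asserted there without a separate check.

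Your final step reproduces exactly this computation; your identity $\dd\widehat\epsilon-\widehat\psi\,\dd\widehat\chi=\dd\epsilon-\psi\,\dd\chi$ is the same algebra. Your first two steps add a target-space argument that the paper omits. With constant $v_0,w_0,\kappa_0$ you get $\theta^2=\dd\widehat\psi/\sqrt{2f}$ and $\theta^3=\dd\widehat\chi/\sqrt{2f}$, while $\theta^0,\theta^1$ are unchanged. So $(f,\epsilon,\psi,\chi)\mapsto(f,\widehat\epsilon,\widehat\psi,\widehat\chi)$ is a point transformation carrying \eqref{ds Transformado} to the Einstein--Maxwell target metric. Because the Euler--Lagrange equations \eqref{Eq:EulerLagrange} are covariant under such changes of variables, the whole system \eqref{EcuacionesDeCampoV2} takes the $v=0$, $w=1$, $\kappa=1$ form. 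The same holds for $A$ and $B$ in \eqref{Variables ABC}.

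That is what actually justifies the word ``trivialization.'' The paper's route is shorter and goes straight to the physical statement about $A_\mu$. Yours explains why the reduction works: the frozen deformation is only a change of coordinates on the potential space.

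Two small points:
\begin{itemize}
\item In the paper's notation $\dd\psi^2$ means $(\dd\psi)^2$, so in your second step you should write $\tfrac{v_0\kappa_0^2}{2}\dd(\psi^2)$.
\item The scalar term $\frac{2\epsilon_0}{\alpha_0^2\kappa^2}\dd\kappa^2$ equals $2\epsilon_0\,(\dd\phi)^2$. It drops out because the scalar is assumed switched off ($\phi$ constant), not merely because $\kappa$ is constant: for $\alpha_0=0$ one has $\kappa\equiv1$ whatever $\phi$ is.
\end{itemize}
Your displayed formula for $\widetilde D\widehat\chi$ also gives the precise normalization $\widehat A_\varphi=\kappa_0\sqrt{w_0}\,A_\varphi$, rather than just proportionality.
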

\begin{proof}
   Let us now consider the frozen ModMax theory in the absence of the scalar field. More precisely, we impose \(\mathtt{z}=0\), which is equivalent to fixing \(\kappa=\kappa_0\), while \(v=v_0\) and \(w=w_0\) are also taken to be constants. Under these assumptions, \eqref{TDchi} becomes
   \(\widetilde D\chi=w_0\,\widetilde D\chi_{MB}-v_0\kappa_0^{2}\,\widetilde D\psi\).
   Hence, up to an irrelevant additive gauge constant, we may write\(\chi=w_0\chi_{MB}-v_0\kappa_0^{2}\psi\) .
   We therefore define
   \[
      \widehat{\psi}=\kappa_0\sqrt{w_0}\,\psi,
      \qquad
      \widehat{\chi}=\frac{\chi+v_0\kappa_0^2\psi}{\kappa_0\sqrt{w_0}}.
   \]
   With this definition, we immediately obtain
   {\small
   \[
      \widehat{\chi}
      =
      \frac{w_0\chi_{MB}}{\kappa_0\sqrt{w_0}}
      =
      \frac{\sqrt{w_0}}{\kappa_0}\chi_{MB}, \quad \text{Implies} \quad  \widetilde D\widehat{\chi}
      =
      \frac{\sqrt{w_0}}{\kappa_0}\widetilde D\chi_{MB}.
   \]
   }
   Now, substituting \(\chi=\kappa_0\sqrt{w_0}\,\widehat{\chi}-v_0\kappa_0^2\psi\) into \eqref{TDepsilon}, we find
   \[
        \widetilde D\epsilon
      =
      \frac{f^2}{\rho}D\omega
      +
      \widehat{\psi}\,\widetilde D\widehat{\chi}
      -
      v_0\kappa_0^2\psi\,\widetilde D\psi .
   \]
   Since
   \(
      v_0\kappa_0^2\psi\,\widetilde D\psi
      =
      \widetilde D\!\left(\frac{v_0\kappa_0^2}{2}\psi^2\right),
   \)
   it follows that
   \[
      \frac{f^2}{\rho}D\omega
      =
      \widetilde D\!\left(\epsilon+\frac{v_0\kappa_0^2}{2}\psi^2\right)
      -
      \widehat{\psi}\,\widetilde D\widehat{\chi}.
   \]
   Thus, defining
   \[
      \widehat{\epsilon}
      :=
      \epsilon+\frac{v_0\kappa_0^2}{2}\psi^2,
   \]
   we obtain the standard Maxwell-like relation
   \[
      \frac{f^2}{\rho}D\omega
      =
      \widetilde D\widehat{\epsilon}
      -
      \widehat{\psi}\,\widetilde D\widehat{\chi}.
   \]

   Therefore, in the frozen sector with \(z=0\), the ModMax deformation is absorbed into the linear redefinition
   \[
      (\psi,\chi,\epsilon)\mapsto
      (\widehat{\psi},\widehat{\chi},\widehat{\epsilon}),
   \]
   and the resulting system takes exactly the Maxwell-like form.
      Finally, since \(\widehat{\psi}=\kappa_0\sqrt{w_0}\,\psi,\)
      and \(A_t=\frac{\psi}{2}\), it follows that \( \widehat{A}_t=\frac{\widehat{\psi}}{2}= \kappa_0\sqrt{w_0}\,A_t \), that is,
   \[
      A_t=\frac{1}{\kappa_0\sqrt{w_0}}\widehat{A}_t.
   \]
   On the other hand, from the definition
   \(
      \widetilde D\chi_{MB}
      =
      \frac{2f\kappa_0^2}{\rho}
      \bigl(\omega\,DA_t+DA_\varphi\bigr),
   \)
   and from
   \(
      \widehat{\chi}
      =
      \frac{\sqrt{w_0}}{\kappa_0}\chi_{MB},
   \)
   we obtain
   \[
      \widetilde D\widehat{\chi}
      =
      \frac{2f}{\rho}
      \Bigl(
      \omega\,D(\kappa_0\sqrt{w_0}\,A_t)
      +
      D(\kappa_0\sqrt{w_0}\,A_\varphi)
      \Bigr).
   \]
   Therefore, defining \(\widehat{A}_\varphi=\kappa_0\sqrt{w_0}\,A_\varphi\), 
   the quadrature takes the standard Maxwell-like form defining 
   \[
      A_\varphi=\frac{1}{\kappa_0\sqrt{w_0}}\widehat{A}_\varphi.
   \]

   In conclusion, both electromagnetic components of the four-potential are normalized by the same constant factor,
   \[
      A_\mu=w_0^{-1/2}\widehat{A}_\mu.
   \]
\end{proof}
\section{First class of solutions}\label{Sec:First class of solutions}
The first class of solutions arises by assuming that $\mathtt{x},\mathtt{y},\mathtt{z}$ are complex numbers, and  imposing $\sigma=0$. In this case, solving the system \eqref{N_EcuacionesDeCampo-abc} algebraically yields the following solution:
\begin{equation}\label{eq:Solution-First-Class}
    \begin{aligned}
        &\mathtt{x}=\mathtt{x}_1=\mathtt{x}_2=i q, \qquad \mathtt{z}=\mathtt{z}_1=\mathtt{z}_2=\frac{2q}{\eta_0 \pm \sqrt{1+\eta_0^2}}, \\
        &\mathtt{y}=\mathtt{y}_1=\mathtt{y_2}=\sqrt{2} \, e^{i \, \Xi} q, \qquad 2\Xi=-\arctan{\eta_0}\, \pm \frac{\pi }{2} ,
    \end{aligned}
\end{equation}
where $q\in \mathbb{R}$.

This solution matches the fourth class of solutions discussed in \cite{Matos:2000za} and the second class of solutions discussed in \cite{Matos:2010pcd}, now generalized to the ModMax framework. By reconstructing the potentials, we arrive at the following expressions:
\begin{equation}\label{eq:Potentials-First-Class}
    \begin{aligned}
        &f=f_0, \qquad \kappa=\kappa_0 e^{-\mathtt{z} \, \mathfrak s}, \\
        &\psi=\psi_0 -\frac{2\sqrt{f_0}}{\kappa_0 \mathtt{z} \sqrt{w_0}} \;  \Re \mathtt{y}\; e^{\mathtt{z} \, \mathfrak s},\\
        &\chi=\chi_0 -\frac{2\kappa_0 \sqrt{f_0 w_0}}{\mathtt{z}}(\Im \mathtt{y}+\eta_0 \; \Re \mathtt{y})e^{-\mathtt{z} \, \mathfrak s},\\
        &\epsilon=\epsilon_1-\frac{2\psi_0 \kappa_0 \sqrt{f_0 w_0}}{\mathtt{z}} (\Im \mathtt{y} + \eta_0 \, \Re \mathtt{y}) e^{-\mathtt{z} \, \mathfrak s}.
    \end{aligned}
\end{equation}
where $f_0,\epsilon_1,\psi_0,\chi_0,\kappa_0$ are real constants, $\mathfrak s=(\xi+\overline{\xi})/2$ is a harmonic function that represents a solution of the Laplace equation in flat space \eqref{Eq:LaplaceXi}, and, for the specific gauge choice $\psi_0=0$, the potential $\epsilon$ becomes constant.

This solution represents a \emph{Geon}, namely a compact object composed purely of an electromagnetic field together with a scalar field, but lacking any gravitational field. An interesting feature is that, when determining the metric function $\omega$, it turns out to be non-constant, even though the rotational potential $\epsilon=\epsilon_1$ remains constant.

If we set $v_0=0$, then $\eta_0$ also vanishes, and we reproduce the solution family presented in \cite{Matos:2010pcd,Matos:2000za} and further analyzed in \cite{Bixano:2025jwm,Bixano:2025bio,Bixano:2025qxp,DelAguila:2018gni,DelAguila:2015isj}.
\section{Second class of solutions}\label{Sec:Second class of solutions}
The second type of solutions consists of using a real harmonic function $\mathfrak s$ that satisfies the Laplace equation in flat spaces, that is, with $\sigma = 0$. When imposing the conditions $\mathtt{x}_1=\mathtt{x}_2=\mathtt{x}(\mathfrak{s})\in \mathbb{R}$, $\mathtt{y}_1=\mathtt{y}_2=\mathtt{y}(\mathfrak{s})\in \mathbb{R}$ and $\mathtt{z}_1=\mathtt{z}_2=\mathtt{z}(\mathfrak{s})\in \mathbb{R}$,the following solution is obtained:
{\small
\begin{equation}\label{eq:Solution-Second-Class}
    \begin{aligned}
        &\mathtt{x}=-\frac{1}{(1+a_0)\mathfrak{s}}, \quad \mathtt{y}=\pm \frac{1}{\sqrt{1+a_0}}\frac{1}{\mathfrak{s}}, \quad
        &\mathtt{z}=-\frac{a_0}{(1+a_0)\mathfrak{s}},
    \end{aligned}
\end{equation}
}
where $a_0=\alpha_0^2/\epsilon_0$.

Using \eqref{N_RelacionesPotenciales-abc}, we obtain the following potentials:
\begin{equation}\label{eq:Potentials-Second-Class}
    \begin{aligned}
        &f=f_0 \mathfrak{s}^{-2/(1+a_0)}, \qquad \kappa=\kappa_0 \mathfrak{s}^{a_0/(1+a_0)}, \\
        &\psi=\psi_0 \pm \frac{2\sqrt{f_0}}{\kappa_0  \sqrt{w_0} \, \sqrt{1+a_0 }} \;  \mathfrak s^{-1},\\
        &\chi=\chi_0 \pm \frac{2\eta_0\kappa_0\sqrt{f_0w_0}\sqrt{1+a_0}}{a_0-1}\,\mathfrak{s}^{(a_0-1)/(a_0+1)},\\
        &\epsilon=\epsilon_1\pm \psi_0\,\frac{2\eta_0\kappa_0\sqrt{f_0w_0}\sqrt{1+a_0}}{a_0-1} \mathfrak{s}^{(a_0-1)/(a_0+1)} \\
        & \qquad \qquad -2\eta_0 f_0\,\mathfrak{s}^{-2/(a_0+1)}.
    \end{aligned}
\end{equation}
The remarkable aspect of this type of solution is that, with this new family, we can clearly observe a very interesting feature: it makes explicit the distinction between frozen ModMax and Maxwell. In fact, if we set $\eta_0 = 0$, the magnetic potential $\chi$ and the rotational potential $\epsilon$ become constant, which in turn implies, via \eqref{DefinicionPotencialesGeneral}, that the solution reduces to a static, purely electric configuration. In other words, within Maxwell theory we have $\omega = 0$ and $A_\varphi = 0$. However, when $\eta_0 \neq 0$, i.e. in the frozen ModMax case, the solution indeed possesses a nontrivial magnetic potential and rotation.
\section{Third class of solutions}\label{Sec:Third class of solutions}
In this case, we derive a new solution analogous to the second class of solutions ($\sigma=0$), but now we impose a phase on the variables $\mathtt{y}$. Specifically, we set $\mathtt{x}_1=\mathtt{x}_2=\mathtt{x}(\mathfrak{s})\in \mathbb{R}$, $\mathtt{y}_1=\mathtt{y}_2=\mathtt{r}(\mathfrak{s}) e^{i \, \Xi}$, and $\mathtt{z}_1=\mathtt{z}_2=\mathtt{z}(\mathfrak{s})\in \mathbb{R}$, which yields the following solution:
{\small
\begin{equation}\label{eq:Solution-Second-Class}
    \begin{aligned}
        &\mathtt{x}=-\frac{1}{(1+a_0)\mathfrak{s}}, \quad \mathtt{r}=\pm \frac{1}{\sqrt{1+a_0}}\frac{1}{\mathfrak{s}}, \quad
        &\mathtt{z}=\frac{a_0}{(1+a_0)\mathfrak{s}}, \\
        & \Xi = \arctan{\frac{1}{\eta_0}}, \qquad  S_\eta=\sin{\Xi}+\eta_0 \cos{\Xi},
    \end{aligned}
\end{equation}
}
The corresponding potentials will be:
{\small
\begin{equation}\label{eq:Potentials-Second-Class}
    \begin{aligned}
        &f=f_0 \mathfrak{s}^{-2/(1+a_0)}, \qquad \kappa=\kappa_0 \mathfrak{s}^{-a_0/(1+a_0)}, \\
        &\psi=\psi_0 \mp \frac{2 \sqrt{f_0}\sqrt{1+a_0} \; \cos{\Xi}}{(a_0-1)\kappa_0 \sqrt{{w_0}}}\,\mathfrak{s}^{(a_0-1)/(a_0+1)} ,\\
        &\chi=\chi_0 \mp \frac{2 \kappa_0 \sqrt{f_0 w_0} \; S_\eta}{ \sqrt{1+a_0 }} \;   \mathfrak s^{-1},\\
        &\epsilon=\epsilon_1 \mp \psi_0 \frac{2\kappa_0\sqrt{f_0w_0}}{\sqrt{1+a_0}} S_\eta\,\mathfrak{s}^{-1}+\frac{2(1+a_0)f_0}{a_0-1}\cos{\Xi}\,S_\eta\,\mathfrak{s}^{-2/(a_0+1)}.
    \end{aligned}
\end{equation}
}
In this situation, we again observe the explicit phenomenology associated with the frozen ModMax theory. Specifically, by choosing $\eta_0 = 0$, we obtain $\Xi = 0$ and $S_\eta = 0$, which in turn implies that the potentials $\chi,\epsilon$ must be constant.

\section{Conclusions}   %
In this work, we employ a Newman–Penrose-type coframe on the space of potentials extended to ModMax, through which we derive the 1-forms $ABC$ to cast the field equations into a more compact form and thereby investigate their structure, following the approach of \cite{Bixano:2026xum}. Using these methods, we obtain the generalization of the Einstein–ModMax–scalar field system applicable to any theory.

Using the equations projected onto the ($\xi,\overline{\xi}$) flat subspace in the frozen ModMax case, we were able to construct three families of new solutions that feature rotation ($\omega \neq 0$), a nontrivial electromagnetic field, and a coupled scalar field. Families 2 and 3 are dual to each other and closely related, yet they do not correspond to the same class of solutions.

By analyzing equations \eqref{N_EcuacionesDeCampo-abc}, we found that in the frozen ModMax theory it is necessary to switch on the scalar field in order to exhibit the characteristic ModMax phenomenology, and thereby avoid reducing to the trivial Maxwell-like scenario.

Similarly, the symmetries of this theory coincide precisely with those described in Ref. \cite{Bixano:2026xum}. It was shown that, when restricting to the frozen ModMax theory, it admits a chiral matrix formulation only in the case $v_0 = 0$; for $v \neq 0$, such a reformulation is not possible.
\section{Acknowledgements}

LB thanks SECIHTI-M\'exico for the doctoral grant.
This work was also partially supported by SECIHTI M\'exico under grants SECIHTI CBF-2025-G-1720 and CBF-2025-G-176. 


\begin{appendices}
\section{Coordinates protection}\label{Apendix:Coordinates protection}
The Weyl ansatz metric \eqref{ds Cilindricas}, when written in oblate ($+$) or prolate ($-$) spheroidal coordinates $(x, y)$, is given by
\begin{multline}\label{ds sp}
    ds^2 = -f\left( dt-\omega d \varphi \right)^2
     + \frac{(L_{\pm})^2}{f} \bigg( (x^2\pm1)(1-y^2) d\varphi^2 \\
     +(x^2\pm y^2) e^{2k} \left\{ \frac{dx^2}{x^2\pm 1} +\frac{dy^2}{1-y^2} \right\} \bigg).
\end{multline}
the Weyl coordinates are connected to $(x, y)$ by the following relation
\begin{equation}\label{WeylSpheroidalCordiantes}
\rho=(L_{\pm})\sqrt{(x^2\pm1)(1-y^2)} , \qquad z=(L_{\pm})xy, 
\end{equation}
and $\rho \in [0,\infty)$, $\{ z ,x \}\in \mathbb{R}$, $y \in [-1,1]$, $(L_{\pm}) \geq 0$.

There are two scenarios

\paragraph{Sub-extreme (S-E: lower sign $-$) condition}:
\begin{align}\label{Sub-Extreme}
|\mathcal{M}_\infty|^{2}&>a^2+Q_{L}^{2}+H_{L}^{2} ,\\
|\mathcal{M}_\infty|^{2}&=L_{-}^2+a^2+Q_{L}^{2}+H_{L}^{2},
\end{align}

\paragraph{Super-extreme (SU-E: upper sign $+$) condition}:
\begin{align}\label{Super-Extreme}
|\mathcal{M}_\infty|^{2}&<a^2+Q_{L}^{2}+H_{L}^{2} ,\\
L_{+}^2+|\mathcal{M}_\infty|^{2}&=a^2+Q_{L}^{2}+H_{L}^{2}.
\end{align}
Here \(a = J_\infty / |\mathcal{M}_\infty|\) represents the angular momentum per unit effective mass, where \(\mathcal{M}_\infty = l_1 + i\,N_\infty\), with \(l_1\) being a length-scale parameter, and \(Q_L = Q_\infty\), \(H_L = H_\infty\) denoting the electric and magnetic geometric charges, respectively, while \(N_\infty\) is the NUT parameter. The quantities \(Q_\infty, H_\infty, N_\infty,\) and \(J_\infty\) are conserved invariant charges, defined in the usual classical sense as in \cite{Komar:1958wp,Nedkova:2011hx,Clement:2015aka,Clement:2022pjr,BallonBordo:2019vrn,Misner:1963fr,Manko:2005nm}. In typical situations, the Komar mass \(M_\infty\) is equal to \(l_1\).
Finally, the Boyer–Lindquist coordinates $(r,\theta)$ can be expressed in terms of the preceding coordinate system as
\begin{equation}\label{BoyerLindquistSpheroidalCordiantes}
    (L_{\pm}) x=r-l_1, \qquad y=\cos{\theta},
\end{equation}
where $r \in (-\infty,-l_1]\cup[l_1,\infty)$, $\theta\in [0,\pi]$, and the variable $l_1=r_s/2$, where $r_s$ represents the Schwarzschild radius.
\subsection{Potential differential equations}
Using the relation \eqref{WeylSpheroidalCordiantes}, the potential equations \eqref{DefinicionPotencialesGeneral}, when written in spheroidal coordinates, assume the following form:
{\small
\begin{subequations}\label{FuncionesMetricas-xy}
\begin{center}
    \text{\textbf{Differential equation for $\omega$}}
    \begin{equation}\label{EcuacionesDiferencialesOmega}
        \begin{bmatrix}
            \partial_x \\
            \partial_y \\
        \end{bmatrix} 
        \omega= \frac{1}{f^2}\Big( \begin{bmatrix}
            (1-y^2)\partial_y  \\
            -(x^2\pm1)\partial_x  
        \end{bmatrix}\epsilon -\psi \begin{bmatrix}
            (1-y^2)\partial_y  \\
            -(x^2\pm1)\partial_x  
        \end{bmatrix}\chi \Big) ,
    \end{equation}

    \text{\textbf{Differential equation for $A_\varphi$}}
    \begin{equation}\label{EcuacionesDiferencialesA3}
        \begin{bmatrix}
            \partial_x \\
            \partial_y \\
        \end{bmatrix} 
        A_\varphi = \frac{1}{2  f \kappa^2 } \begin{bmatrix}
            (1-y^2)\partial_y  \\
            -(x^2\pm1)\partial_x  
        \end{bmatrix} \chi
        -\frac{\omega}{2} 
        \begin{bmatrix}
            \partial_x \\
            \partial_y \\
        \end{bmatrix} \psi.
    \end{equation}
    \end{center}
\end{subequations}
}
The metric function $k$ can be expressed by the following integral
{\footnotesize
\begin{subequations}\label{kVariableCompleja}
\begin{align}
    k_{,\zeta}=& \frac{\rho}{4f^2} \bigg[ \tensor{f}{_{,\zeta}}{^2}+(\tensor{\epsilon}{_{,\zeta}} -\psi \tensor{\chi}{_{,\zeta}})^2 \bigg] +\frac{2 \rho \epsilon_0}{\alpha_0 ^2 \, \kappa^2 }\tensor{\kappa}{_{,\zeta}}{^2}
    \\
    &-\frac{\rho}{2 f}\bigg[ \frac{w^2+v^2}{w}\kappa^2 \tensor{\psi}{_{,\zeta}}{^2} +\frac{\tensor{\chi}{_{,\zeta}}{^2}}{w \, \kappa^2} +2 \frac{v}{w} \tensor{\psi}{_{,\zeta}} \tensor{\chi}{_{,\zeta}}\bigg],\label{k Chi}
\end{align}
\begin{align}
    k_{,\overline \zeta}=& \frac{\rho}{4f^2} \bigg[ \tensor{f}{_{,\overline \zeta}}{^2}+(\tensor{\epsilon}{_{,\overline \zeta}} -\psi \tensor{\chi}{_{,\overline \zeta}})^2 \bigg] +\frac{2 \rho \epsilon_0}{\alpha_0 ^2 \, \kappa^2 }\tensor{\kappa}{_{,\overline \zeta}}{^2}
    \\
    &-\frac{\rho}{2 f}\bigg[ \frac{w^2+v^2}{w}\kappa^2 \tensor{\psi}{_{,\overline \zeta}}{^2} +\frac{\tensor{\chi}{_{,\overline \zeta}}{^2}}{w \, \kappa^2} +2 \frac{v}{w} \tensor{\psi}{_{,\overline \zeta}} \tensor{\chi}{_{,\overline \zeta}}\bigg].\label{k complejaconjugada}
\end{align}
\end{subequations}
}
where $\zeta=\rho +i z$.

Note that each quadratic term arising from a partial derivative, $\mathtt{F}_{,\zeta}$ or $\mathtt{F}_{,\overline \zeta}$ with $\mathtt F \in \{f,\epsilon,\psi,\chi,\kappa \}$, and with coeficients
{\small
\[
    \mathtt G_{\mathtt{F}} \in \left\{
    \frac{1}{4f^2}, \;
    \frac{1}{4f^2}, \;
    -\frac{\kappa^2}{2f}\frac{v^2+w^2}{w}, \;
    \frac{\psi^2}{4f^2}-\frac{1}{2fw\kappa^2}, \;
    \frac{2\epsilon_0}{\alpha_0^2 \kappa^2}
     \right\},
\]
}
successively takes on the following form when expressed in spheroidal coordinates:
{\small
\begin{subequations}\label{kCadaTerminoCuadradito}
\begin{align}
    &\partial_x k \propto \mathtt G_{\mathtt{F}} \; \frac{(1-y^2)}{x^2\pm y^2} \bigg\{ x\Big[ (x^2\pm1)(\partial_x \mathtt F)^2-(1-y^2) (\partial_y \mathtt F)^2  \Big] \nonumber \\
    & \qquad -2y (x^2\pm1)(\partial_x \mathtt F)(\partial_y \mathtt F) \bigg\}, \label{EcuacionesDiferencialesk1Completo}\\
    &\partial_y k \propto \mathtt G_{\mathtt{F}} \;  \frac{(x^2\pm 1)}{x^2\pm y^2} \bigg\{ y \Big[ (x^2\pm 1)(\partial_x \mathtt F)^2-(1-y^2) (\partial_y \mathtt F)^2  \Big] \nonumber \\ 
    & \qquad + 2x (1-y^2)(\partial_x \mathtt F)(\partial_y \mathtt F) \bigg\} \label{EcuacionesDiferencialesk2Completo} ,
\end{align}
\end{subequations}
}
Similarly, every mixed bilinear contribution of the form
$\mathtt F_{,\zeta}\mathtt H_{,\zeta}$, with coefficient
$\mathtt G_{\mathtt F\mathtt H}$, adopts the universal spheroidal structure
{\footnotesize
\begin{subequations}\label{kCadaTerminoCruzadito}
\begin{align}
    &\partial_x k \propto \mathtt G_{\mathtt F\mathtt H}\,
    \frac{(1-y^2)}{x^2\pm y^2}
    \bigg\{
    x\Big[
    (x^2\pm1)(\partial_x\mathtt F)(\partial_x\mathtt H)
    -(1-y^2)(\partial_y\mathtt F)(\partial_y\mathtt H)
    \Big]
    \nonumber\\
    &\qquad\qquad
    -y(x^2\pm1)\Big[
    (\partial_x\mathtt F)(\partial_y\mathtt H)
    +(\partial_y\mathtt F)(\partial_x\mathtt H)
    \Big]
    \bigg\},
    \label{EcuacionesDiferencialesk1CruzadoCompleto}
    \\[0.3em]
    &\partial_y k \propto \mathtt G_{\mathtt F\mathtt H}\,
    \frac{(x^2\pm1)}{x^2\pm y^2}
    \bigg\{
    y\Big[
    (x^2\pm1)(\partial_x\mathtt F)(\partial_x\mathtt H)
    -(1-y^2)(\partial_y\mathtt F)(\partial_y\mathtt H)
    \Big]
    \nonumber\\
    &\qquad\qquad
    +x(1-y^2)\Big[
    (\partial_x\mathtt F)(\partial_y\mathtt H)
    +(\partial_y\mathtt F)(\partial_x\mathtt H)
    \Big]
    \bigg\},
    \label{EcuacionesDiferencialesk2CruzadoCompleto}
    .
\end{align}
\end{subequations}
}
Here $\mathtt G_{\mathtt F\mathtt H}$ denotes the coefficient multiplying the mixed term
$\mathtt F_{,\zeta}\mathtt H_{,\zeta}$ in the Weyl quadrature. In the present case, the non-vanishing mixed contributions are
\[
(\mathtt F,\mathtt H,\mathtt G_{\mathtt F\mathtt H})
=
\left(\epsilon,\chi,-\frac{\psi}{2f^2}\right),
\qquad
\left(\psi,\chi,-\frac{v}{fw}\right).
\]
The corresponding terms involving
$\mathtt F_{,\overline\zeta}\mathtt H_{,\overline\zeta}$ are obtained analogously, with the same coefficient
$\mathtt G_{\mathtt F\mathtt H}$.

Therefore, the full $k$-quadrature is obtained by collecting the purely quadratic terms from \eqref{kCadaTerminoCuadradito} and then adding the mixed contribution from \eqref{kCadaTerminoCruzadito}.
\section{Structure of the space of potentials}
\label{Appendix:Structure of the space of potentials}
Before presenting the explicit formulas, we first introduce the notation that will be used:
\begin{equation}\label{eq:Def_Lambda_Nu}
    \lambda=\mathrm d(\ln w)=\frac{\mathrm d w}{w}, 
\qquad
\nu=\frac{\mathrm d v}{w},  \qquad \eta=\frac{v}{w}.
\end{equation}
Consequently, we obtain
\[
    \dd \lambda=0, \qquad \dd \nu=-\lambda \wedge \nu, \qquad \dd \eta =\nu-\eta \lambda. 
\]
\subsection{Connection 1-forms}
From the relation
\(
\dd \theta^I + \omega^I{}_{J} \wedge \theta^J = 0,
\)
we have derived the connection 1-forms associated with the Newman–Penrose-like null coframe \eqref{eq:Coframe-NP-like}, which are given by
{\footnotesize
\begin{equation}\label{eq:1Forma-Conexion}
    \begin{aligned}
        &\omega^n{}_{\ell}=-n,\qquad\omega^\ell{}_{n}=-\ell, \qquad \omega^n{}_{m}=i\,\bar m,\qquad \omega^\ell{}_{\bar m}=i\,m,
        \\
        &\omega^{\bar m}{}_{\ell}=-\frac12\,\bar m,
        \qquad 
        \omega^{\bar m}{}_{n}=-\frac12\,\bar m, 
        \\
        &\omega^m{}_{\ell}=-\frac12\,m,
        \qquad 
        \omega^m{}_{n}=-\frac12\,m, 
        \\
        &\omega^m{}_{m}=-\frac{i}{2}\nu-\frac{i\alpha_0}{\sqrt2}\eta\,s, 
        \qquad 
        \omega^{\bar m}{}_{\bar m}=+\frac{i}{2}\nu+\frac{i\alpha_0}{\sqrt2}\eta\,s,
        \\
        &\omega^m{}_{\bar m}=-\frac12(\lambda+i\nu),
        \qquad 
        \omega^{\bar m}{}_{m}=-\frac12(\lambda-i\nu), 
        \\
        &\omega^m{}_{s}=\frac{\alpha_0}{\sqrt2}(1+i\eta)\,\bar m,
        \qquad 
        \omega^{\bar m}{}_{s}=\frac{\alpha_0}{\sqrt2}(1-i\eta)\,m,
        \\
        &\omega^{s}{}_{m}=\frac{\alpha_0}{\sqrt2\,\epsilon_0}(1-i\eta)\,m, 
        \qquad 
        \omega^{s}{}_{\bar m}=\frac{\alpha_0}{\sqrt2\,\epsilon_0}(1+i\eta)\,\bar m.
    \end{aligned}
\end{equation}
}
All the other vanishes.
\subsection{Curvature 2-forms}
Using the relation
\(
\Omega^I{}_{J} = \mathrm d\omega^I{}_{J} + \omega^I{}_{K} \wedge \omega^K{}_{J}
\), we then derived the associated curvature 2-forms:
{\footnotesize
\begin{equation}\label{eq:2Forma-Curvatura}
    \begin{aligned}
    &\Omega^n{}_{\ell}
    =
    \ell\wedge n-\frac{i}{2}\,m\wedge\bar m,
    \qquad
    \Omega^\ell{}_{n}
    =
    -\ell\wedge n+\frac{i}{2}\,m\wedge\bar m,
    \\
    &\Omega^{\bar m}{}_{\ell}
    =
    \frac14\,\ell\wedge\bar m
    -\frac14\,n\wedge\bar m
    +\frac{\alpha_0}{2\sqrt2}(1-i\eta)\,m\wedge s,
    \\
    &\Omega^{\bar m}{}_{n}
    =
    -\frac14\,\ell\wedge\bar m
    +\frac14\,n\wedge\bar m
    +\frac{\alpha_0}{2\sqrt2}(1-i\eta)\,m\wedge s,
    \\
    &\Omega^{m}{}_{\ell}
    =
    \frac14\,\ell\wedge m
    -\frac14\,n\wedge m
    +\frac{\alpha_0}{2\sqrt2}(1+i\eta)\,\bar m\wedge s,
    \\
    &\Omega^{m}{}_{n}
    =
    -\frac14\,\ell\wedge m
    +\frac14\,n\wedge m
    +\frac{\alpha_0}{2\sqrt2}(1+i\eta)\,\bar m\wedge s, 
    \\
    &\Omega^{m}{}_{m}
    =
    -\frac{i\alpha_0}{\sqrt2}\left(\nu-\eta\lambda\right)\wedge s
    -
    \left[
        \frac{i}{2}
        +
        \frac{\alpha_0^2}{2\epsilon_0}
        \left(
            1+\eta^2
        \right)
    \right]
    m\wedge\bar m,
    \\
    &\Omega^{\bar m}{}_{\bar m}
    =
    +\frac{i\alpha_0}{\sqrt2}\left(\nu-\eta\lambda\right)\wedge s
    +
    \left[
        \frac{i}{2}
        +
        \frac{\alpha_0^2}{2\epsilon_0}
        \left(
            1+\eta^2
        \right)
    \right]
    m\wedge\bar m,
    \\
    &\Omega^{m}{}_{\bar m}
    =
    \frac{i\alpha_0}{\sqrt2}\eta\,s\wedge\lambda
    -\frac{\alpha_0}{\sqrt2}\eta\,s\wedge\nu,
    \\
    &\Omega^{\bar m}{}_{m}
    =
    -\frac{i\alpha_0}{\sqrt2}\eta\,s\wedge\lambda
    -\frac{\alpha_0}{\sqrt2}\eta\,s\wedge\nu,
    \\
    &\Omega^{\bar m}{}_{s}
    =
    -\frac{\alpha_0}{2\sqrt2}
    \left(
        1-i\eta
    \right)
    (\ell+n)\wedge m
    \\
    &\qquad\qquad
    +
    \left[
        \frac{\alpha_0\eta}{\sqrt2}
        \left(
            \nu+i\lambda
        \right)
        +
        \alpha_0^2\eta(\eta+i)\,s
    \right]\wedge m
    \\
    &\qquad\qquad
    +
    \left[
        \frac{i\alpha_0}{\sqrt2}
        \left(
            \nu-\eta\lambda
        \right)
        +
        \frac{\alpha_0^2}{2}
        \left(
            1+\eta^2
        \right)s
    \right]\wedge\bar m,
    \\
    &\Omega^{m}{}_{s}
    =
    -\frac{\alpha_0}{2\sqrt2}
    \left(
        1+i\eta
    \right)
    (\ell+n)\wedge \bar m
    \\
    &\qquad\qquad
    +
    \left[
        \frac{\alpha_0\eta}{\sqrt2}
        \left(
            \nu-i\lambda
        \right)
        +
        \alpha_0^2\eta(\eta-i)\,s
    \right]\wedge \bar m
    \\
    &\qquad\qquad
    +
    \left[
        -\frac{i\alpha_0}{\sqrt2}
        \left(
            \nu-\eta\lambda
        \right)
        +
        \frac{\alpha_0^2}{2}
        \left(
            1+\eta^2
        \right)s
    \right]\wedge m.
    \end{aligned}
\end{equation}
}
All the other vanishes.
\end{appendices}
\bibliographystyle{elsarticle-harv} 
\bibliography{Bibliografia}

\end{document}